\newcommand\reallywidehat[1]{%
\savestack{\tmpbox}{\stretchto{%
  \scaleto{%
    \scalerel*[\widthof{\ensuremath{#1}}]{\kern-.6pt\bigwedge\kern-.6pt}%
    {\rule[-\textheight/2]{1ex}{\textheight}}
  }{\textheight}%
}{0.5ex}}%
\stackon[1pt]{#1}{\tmpbox}%
}
\def\BState{\State\hskip-\ALG@thistlm}
\newcommand*{\addFileDependency}[1]{
  \typeout{(#1)}
  \@addtofilelist{#1}
  \IfFileExists{#1}{}{\typeout{No file #1.}}
}
\def\spacingset#1{\renewcommand{\baselinestretch}%
{#1}\small\normalsize} \spacingset{1}
\newcommand{\eqn}{\begin{eqnarray}}
\newcommand{\ee}{\end{eqnarray}}
\newcommand{\eqnn}{\begin{eqnarray*}}
\newcommand{\een}{\end{eqnarray*}}
\newcommand{\ea}{\end{align}}
\newcommand{\be}{\begin{eqnarray}}
\newcommand{\ba}{\begin{align}}
\def\bSig\mathbf{\Sigma}
\DeclareMathOperator{\logit}{logit}
\title{Semiparametrically Efficient Score for the Survival Odds Ratio}{}
\author
{
Denise Rava, drava@ucsd.edu \\
Department of Mathematics, University of California, San Diego
\and
Jelena Bradic, jbradic@ucsd.edu \\
Department of Mathematics and Halicioglu Data Science Institute\\
 University of California, San Diego

\and
Ronghui Xu, rxu@health.ucsd.edu \\
Department of Mathematics and Herbert Wertheim School of Public Health and Halicioglu Data Science Institute\\
 University of California, San Diego
}
\date{}
\begin{document}

\maketitle

\abstract{ We consider a  general proportional odds model for survival data under binary treatment, 
where the  functional form of the covariates is left unspecified.
We derive the efficient score for the conditional survival odds ratio given the covariates
 using modern semiparametric theory.
The efficient score may be useful in the development of doubly robust estimators, although computational challenges remain.  
 }

\newtheorem{lemma}{Lemma}
\newtheorem{theorem}{Theorem}
\newtheorem{assumption}{Assumption}

\maketitle

\section{Introduction}


 The commonly used hazard ratio recently come under scrutiny due to its conditioning on the  risk sets which became  not  exchangeable 
 even under randomized treatment  assignment at time zero 
\citep{hernan2010hazards,aalen2015does, 
martinussen2020subtleties}.
While the debate is still on-going \citep{prentice2022intention, ying:xu}, a useful alternative measure of  treatment effect is the  survival odds ratio.
The proportional odds model for survival data,   assuming  the  odds ratio to be constant over time given 
the treatment and the covariates, has been well studied in the literature 
\citep{bennett1983analysis,pettitt1984proportional,  dinse1983regression, 
dabrowska1988estimation,murphy1997maximum,rossini1996semiparametric,scharfstein1998semiparametric,yang1999semiparametric, xu2001semiparametric, 
collet2003modelling,crowder2017statistical}.
 It is a special case of the transformation model with error belonging to the $G^\rho$ family of \cite{harr:flem}, where $\rho=1$. 
It models attenuating hazard ratios among treatment groups and also provides a useful alternative to the otherwise commonly used proportional hazards assumption, the latter equivalent to the $G^\rho$ family of error distribution with $\rho=0$. 

In this paper we consider a more general proportional odds model for survival data, 
and relax the  functional forms of the covariates in the model.
We derive the efficient score for the conditional survival odds ratio 
 using modern semiparametric theory.
Efficient scores are useful in practice, not only because they are  efficient, but also because they can be useful in the development of doubly robust estimators \citep{bickel1993efficient,tsiatis2007semiparametric,zheng2016doubly,dukes2019doubly,rava2023doubly}.

\subsection{Model and notation}

Denote $T$ the time to failure, $C$ the censoring random variable and $X=\min{(T,C)}$ the observed (and possibly censored) failure time.
Let $\delta$ be the event indicator, $A=0,1$ a binary treatment, $Z$ be a vector of baseline covariates, and $\tau <\infty$ be an upper limit to follow-up time.  

Denote $S(t | A, Z)$ the conditional survival function of $T$ given $A$ and $Z$. 
Consider the following model for the distribution of $T$ given $A$ and $Z$: 
\be\label{model}
\logit \left\{S(t | A,Z)\right\}=\beta A-G(t,Z),
\ee
where $\logit(x)=\log\{x/(1-x)\}$ and $G(t,Z)$ is completely unspecified.
This is more general than the classical proportional odds model for survival data that assumes $G(t,Z)$ to be 
a baseline function in $t$ plus a linear function in $Z$ \citep{bennett1983analysis,pettitt1984proportional,chen2012estimating}. 
Under model \eqref{model} the treatment effecte expressed as
 the log survival odds ratio between the treated and the untreated, is captured by the estimand $\beta$:
 \be\label{effect}
\beta=\log\left\{\frac{S(t | A =1, Z)}{1-S(t | A = 1, Z)} \left/ \frac{S(t | A = 0, Z)}{1-S(t | A = 0, Z)} \right. \right\}.
\ee

We use the counting process and the at-risk process notation: $N(t)=\mathbbm{1}\left\{X\leq t, \; \delta=1\right\}$ and $Y(t)=\mathbbm{1}\left\{X\geq t\right\}$.
Define the martingale $M(t):=N(t)-\int_0^tY(u)\lambda(u | A,Z)du$, where $\lambda(t | A,Z)$ is the conditional hazard function. 
We make the standard assumption $C \perp T  |  A,Z$, where  $\perp$ indicates statistical independence.

\section{Semiparametrically efficient score}


In the following we derive the semiparametrically efficient score for the
 estimand $\beta$, following the approach described in \cite{tsiatis2007semiparametric}.
We consider the reparametrization as in \cite{chen2012estimating}: $R(t,Z)=\exp\{G(t,Z))\}$. 
 Define also
$r(t,Z)=\partial_t R(t,Z)$, where $ \partial_t$ is shorthand for $ \partial/ \partial_t$.
Under model \eqref{model}, it can be verified immediately that:
\be\label{S}
 S(t | A,Z)= \frac{\exp(\beta A)}{\exp(\beta A)+R(t,Z)},
\ee
and
\be \label{lambda}
\lambda(t | A,Z)=\frac{r(t,Z)}{\exp (\beta A)+  R(t,Z)}.
\ee


Under model \eqref{model}, the likelihood for a single copy of the data is:
\eqnn
L(\beta, R,\Lambda_c,f) &=&\left\{\frac{r(X,Z)}{\exp (\beta A)+ R(X,Z)}\right\}^\delta\frac{\exp(\beta A)}{\exp(\beta A)+R(X,Z)}
\\
&& \cdot \lambda_c(X | A,Z)^{1-\delta}\exp\{-\Lambda_c(X | A, Z)\}f(A,Z),
\een
where $\Lambda_c(t | a, z)=\int_0^t\lambda_c(u |a,z)du$, $\lambda_c(t | a,z)$ is the conditional hazard function for $C$ given $A$ and $Z$, and $f(a,z)$ is the joint density or probability function of $A$ and $Z$. Here the parameter of interest  is $\beta$ while the nuisance parameter is 
$\eta=\left[R,\lambda_c,f\right]^\top$.
Following lemma 2 of \citet{rava2023doubly}, 
 the score for $\beta$ is:
\be\label{Sbeta}
S_\beta=\frac{\partial\log L }{\partial\beta}=-\int_0^\tau A \cdot S(t | A,Z) dM(t).
\ee

Let $\mathbf\Lambda$ be the nuisance tangent space, i.e. the space spanned by the nuisance score function, and let $\mathbf\Lambda^\perp$ be its orthogonal complement.
The efficient score is $S_{eff}=S_\beta- \Pi\{S_\beta | \mathbf\Lambda\} = \Pi\{S_\beta | \mathbf\Lambda^\perp\}$  \citep{tsiatis2007semiparametric}, where $\Pi\{S_\beta | \mathbf\Lambda\}$ and $\Pi\{S_{\beta} | \mathbf\Lambda^\perp\}$ are the projections of $S_{\beta}$ onto $\mathbf\Lambda$ and $\mathbf\Lambda^{\perp}$, respectively.


Let $\Lambda_{1s}$,  $\Lambda_{2s}$ and $\Lambda_{3s}$ be the nuisance tangent spaces of $R$, $\lambda_c$ and $f$, respectively.
Denote $M_c(t)$ the martingale associated with the censoring distribution.
The following Lemmas describe $\mathbf\Lambda$ and $\mathbf\Lambda^\perp$ under model \eqref{model}.

\begin{lemma}\label{lemma1}
Under model \eqref{model}, the nuisance tangent space $\mathbf\Lambda$ is
\be\label{ds}
\mathbf\Lambda=\Lambda_{1s} \oplus \Lambda_{2s} \oplus \Lambda_{3s},
\ee
where
\be\label{lam1}
\Lambda_{1s}&=&\left\{\int_0^\tau \left\{\frac{\partial_t h(t, Z) }{\exp(\beta A)\lambda(t | A,Z)}-\frac{h(t,Z)}{\exp(\beta A)}\right\}S(t | A,Z)dM(t)\right.
\\
&&\left.\nonumber
\;\; : \; for\;\;all\;\;h(t,Z)\;\;s.t.\;\;\partial_t h(t, Z)\;\;exists\right\},
\ee
\be\label{lam2}
\Lambda_{2s}=\left\{\int_0^\tau \alpha (t, A, Z) dM_c(t)\; : \;for\;all\;\alpha(t,A,Z)\right\},
\ee
and 
\be\label{lam3}
\Lambda_{3s}=\left\{b(A,Z)\; : \;E\left[b(A,Z)\right]=0\right\}.
\ee
\end{lemma}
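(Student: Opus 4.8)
The plan is to exploit the factorization of the likelihood. Writing $\log L = \log L_1(\beta,R) + \log L_2(\lambda_c) + \log L_3(f)$, where $L_1$ collects the two failure-time factors, $L_2$ the two censoring factors, and $L_3 = f(A,Z)$, the three blocks depend on variationally independent nuisance components. Consequently the nuisance tangent space is the sum of the three individual tangent spaces $\Lambda_{1s}$, $\Lambda_{2s}$, $\Lambda_{3s}$. To upgrade this sum to the orthogonal direct sum \eqref{ds}, I would verify mutual orthogonality in $L^2$: every element of $\Lambda_{3s}$ is a function of $(A,Z)$, while every element of $\Lambda_{1s}$ and $\Lambda_{2s}$ is a stochastic integral against a martingale and hence has conditional mean zero given $(A,Z)$, which annihilates those inner products; and $\Lambda_{1s}\perp\Lambda_{2s}$ because $M$ and $M_c$ are orthogonal martingales (no common jumps under $C\perp T\mid A,Z$), so that $E\!\left[\int\phi\,dM\int\alpha\,dM_c\right]=0$.

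Next I would dispose of the two standard blocks. The space $\Lambda_{3s}$ is the familiar tangent space of a fully nonparametric density: differentiating a regular parametric submodel $f_\gamma$ produces a mean-zero score $b(A,Z)$, and as the submodel varies these sweep out all of \eqref{lam3}. Likewise, $\Lambda_{2s}$ is the classical tangent space of a nonparametric conditional hazard: perturbing $\lambda_c$ in a direction $\alpha$ yields a score of the form $\int_0^\tau \alpha(t,A,Z)\,dM_c(t)$, and letting $\alpha$ range freely over functions of $(t,A,Z)$ recovers \eqref{lam2}.

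The main work is the block $\Lambda_{1s}$. I would take a submodel $R_\gamma(t,Z)$ equal to $R$ at $\gamma=0$, with direction $\dot R=\partial_\gamma R_\gamma|_{\gamma=0}$ and $\dot r=\partial_t\dot R$; the identifiability boundary condition $R(0,Z)=0$ (forced by $S(0\mid A,Z)=1$) is preserved along the submodel, so $\dot R(0,Z)=0$. Differentiating $\log L_1$ gives the score $\delta\,\dot r/r-(\delta+1)\dot R/(\exp(\beta A)+R)$, and I would show this equals the integral in \eqref{lam1} with $h=\dot R$. Two algebraic identities, both immediate from \eqref{S}--\eqref{lambda}, collapse the stated coefficients: $S(t\mid A,Z)/[\exp(\beta A)\lambda(t\mid A,Z)]=1/r(t,Z)$ and $S(t\mid A,Z)/\exp(\beta A)=1/(\exp(\beta A)+R(t,Z))$. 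Splitting $dM=dN-Y\lambda\,dt$, the $dN$ piece evaluates at the event time $X$ and reproduces the $\delta$-terms of the score, while the compensator piece is recognized as the exact derivative $\partial_t\!\left[\dot R/(\exp(\beta A)+R)\right]$; integrating by parts over $[0,X]$ leaves only boundary terms, the lower one vanishing by $\dot R(0,Z)=0$, and these reproduce the remaining term of the score. Letting $\dot R$ range over all admissible directions, $h$ ranges over all functions for which $\partial_t h$ exists, yielding exactly \eqref{lam1}.

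The hard part will be this last block: spotting that the compensator integrand is a total $t$-derivative so that integration by parts reduces it to a single boundary evaluation, and correctly invoking the $t=0$ constraint so the lower boundary term drops out. The remaining two blocks and the orthogonality argument are standard and should follow quickly once the factorization is in place.
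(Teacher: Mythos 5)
Your proposal is correct and takes essentially the same route as the paper: the likelihood factorization gives the direct sum \eqref{ds}, the blocks $\Lambda_{2s}$ and $\Lambda_{3s}$ are handled as the standard nonparametric tangent spaces, and $\Lambda_{1s}$ is characterized by differentiating along parametric submodels of $R$, with the resulting scores rewritten via the identities $S(t\mid A,Z)/\{\exp(\beta A)\lambda(t\mid A,Z)\}=1/r(t,Z)$ and $S(t\mid A,Z)/\exp(\beta A)=1/\{\exp(\beta A)+R(t,Z)\}$. The only differences are presentational: you derive in-line (through $dM=dN-Y\lambda\,dt$, recognizing the compensator integrand as a total $t$-derivative, and integrating by parts) the martingale-integral form of the submodel score that the paper simply imports from Lemma 2 of \cite{rava2023doubly}; you verify the mutual orthogonality of the three blocks directly where the paper cites \cite{tsiatis2007semiparametric}; and the paper makes your closing step explicit by exhibiting, for each admissible $h$, the submodel $R+\gamma h$ whose score is the corresponding element of \eqref{lam1}.
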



\begin{lemma}\label{lemma2}
Under model \eqref{model}, the orthogonal nuisance tangent space is:
\be
\mathbf\Lambda^\perp&=& \left\{\int_0^\tau\left\{h(t,A,Z)-\frac{\partial_t h_0(t, Z) }{\exp(\beta A)\lambda(t | A,Z)}+\frac{h_0(t,Z)}{\exp(\beta A)}\right\}S(t | A,Z)dM(t)\right.\nonumber
\\
&&\left.\; : \;for\;\;all\;\;h(t,A,Z)\right\},
\ee
where $h_0(t,Z)$, for almost every $t$, is a solution of the integro-differential equation 
\ba\label{diffeqc}
&\partial_t h_0(t,Z)-h_0(t,Z)m(t,A,Z)-q\{t,A,Z,h\} \nonumber\\
&+v(t,A,Z)\int_0^t\left\{\partial_u h_0(u, Z)w(u,A,Z)- h_0(u, Z)k(u,A,Z)\right\}du=0.
\end{align}
The coefficients $m(t,A,Z),q\{t,A,Z,h\},v(t,A,Z),w(t,A,Z),k(t,A,Z)$ are provided in  \ref{appendixA}.
\end{lemma}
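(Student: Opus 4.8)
The plan is to work in the Hilbert space $\mathcal H$ of mean-zero, square-integrable functions of the observed data $(X,\delta,A,Z)$ equipped with the covariance inner product $\langle g_1,g_2\rangle=E[g_1 g_2]$, and to compute $\mathbf\Lambda^\perp$ as an orthogonal complement inside $\mathcal H$. The first step is to record the standard orthogonal decomposition $\mathcal H=\mathcal H_1\oplus\mathcal H_2\oplus\mathcal H_3$ induced by the factorization of the likelihood, where $\mathcal H_1=\{\int_0^\tau a(t,A,Z)\,dM(t)\}$ is the space of all failure-martingale integrals, $\mathcal H_2=\{\int_0^\tau b(t,A,Z)\,dM_c(t)\}$ is the space of all censoring-martingale integrals, and $\mathcal H_3=\{c(A,Z):E[c]=0\}$. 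Mutual orthogonality follows from the facts that $M$ and $M_c$ have no common jumps, so $\langle M,M_c\rangle\equiv 0$, and that both martingale integrals have conditional mean zero given $(A,Z)$.

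Next I would exploit the structure of Lemma~\ref{lemma1}: because $\lambda_c$ and $f$ are left completely unrestricted, $\Lambda_{2s}=\mathcal H_2$ and $\Lambda_{3s}=\mathcal H_3$ are the full subspaces, while $\Lambda_{1s}\subset\mathcal H_1$ is a proper subspace. Consequently any element of $\mathbf\Lambda^\perp$ must be orthogonal to all of $\mathcal H_2\oplus\mathcal H_3$, forcing $\mathbf\Lambda^\perp\subseteq\mathcal H_1$ and reducing the problem to $\mathbf\Lambda^\perp=\mathcal H_1\ominus\Lambda_{1s}$. I would then parametrize a generic element of $\mathcal H_1$ as $\int_0^\tau h(t,A,Z)S(t\mid A,Z)\,dM(t)$, factoring out the strictly positive $S$, and subtract its projection onto $\Lambda_{1s}$. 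Since every integrand in $\Lambda_{1s}$ has the form $\{\partial_t h_0/(e^{\beta A}\lambda)-h_0/e^{\beta A}\}S$ for some $h_0(t,Z)$, the residual is exactly $\int_0^\tau\{h-\partial_t h_0/(e^{\beta A}\lambda)+h_0/e^{\beta A}\}S\,dM$, which is the claimed form; letting $h$ range over all functions of $(t,A,Z)$ sweeps out all of $\mathbf\Lambda^\perp$, because the residual map $I-\Pi\{\cdot\mid\Lambda_{1s}\}$ is onto $\mathcal H_1\ominus\Lambda_{1s}$.

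It remains to identify the coefficient $h_0$, which is pinned down by the projection (normal) equations: the residual must be orthogonal to every $\int_0^\tau\{\partial_t\tilde h/(e^{\beta A}\lambda)-\tilde h/e^{\beta A}\}S\,dM$ with $\tilde h(t,Z)$ arbitrary. Evaluating these inner products through the predictable variation $d\langle M\rangle(t)=Y(t)\lambda(t\mid A,Z)\,dt$ together with $E[Y(t)\mid A,Z]=S(t\mid A,Z)S_c(t\mid A,Z)$, the condition becomes the vanishing, for all $\tilde h(t,Z)$, of a time integral in which $\tilde h$ and $\partial_t\tilde h$ appear linearly against $A$-averaged weights. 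Because $\tilde h$ does not depend on $A$, I would collect these weights by summing over the two treatment arms against $f(A\mid Z)$ and then integrate the $\partial_t\tilde h$ contribution once in $t$; setting the resulting coefficient of $\tilde h$ to zero gives a single equation for $h_0(\cdot,Z)$. This integration in $t$ of the $h_0$-terms is precisely what produces the nonlocal $\int_0^t\{\partial_u h_0\,w-h_0\,k\}\,du$ contribution, while the remaining local terms give $\partial_t h_0-h_0 m$, the free direction $h$ is gathered into the forcing term $q\{t,A,Z,h\}$, and dividing by the coefficient of $\partial_t h_0$ supplies the prefactor $v$; the explicit expressions for $m,q,v,w,k$ are the bookkeeping deferred to \ref{appendixA}.

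I expect the main obstacle to be this last step, and the difficulty is twofold. First, since $h_0$ is constrained to depend only on $(t,Z)$ while the weights depend on $A$, the two treatment arms cannot be decoupled and one is forced to solve a single equation mixing both arms, which is what introduces the integral operator and turns the identity into a Volterra-type integro-differential equation rather than a pointwise relation. Second, one must verify that this equation actually admits a solution $h_0$ for each admissible forcing $h$, so that the projection, and hence the stated parametrization of $\mathbf\Lambda^\perp$, is well defined; the boundary contributions at $t=0$ and $t=\tau$ arising from the integration by parts must also be checked to vanish for the normal equations to reduce cleanly to the displayed form.
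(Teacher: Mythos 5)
Your proposal is correct and follows essentially the same route as the paper's proof: reduce $\mathbf\Lambda^\perp$ to the residual of projecting the unrestricted failure-time tangent space $\left\{\int_0^\tau h(t,A,Z)S(t\mid A,Z)\,dM(t)\right\}$ onto $\Lambda_{1s}$, impose the normal equations via the predictable variation of $M$, integrate by parts in $t$ (which is what creates the Volterra term $\int_0^t\{\partial_u h_0\,w-h_0\,k\}\,du$), and use $E\{\cdot\,Y(t)\mid Z\}=E\{\cdot\,S\,S_c\mid Z\}$ to average over the treatment arms and obtain the coefficients of Appendix A. The only cosmetic difference is that the paper runs the integration by parts so as to set the coefficient of $\partial_t\tilde h$ (rather than of $\tilde h$) to zero, yielding the pointwise condition $E\{H(t,A,Z)\mid Z\}=0$ for almost every $t$; the decomposition, normal equations, and bookkeeping otherwise coincide, and the boundary and existence issues you flag are likewise left unaddressed in the paper.
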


The proofs of all results can be found in \ref{mainproof}.

Since $S_\beta=-\int_0^\tau A \cdot S(t | A,Z)dM(t)$, its projection on $\mathbf\Lambda^\perp$ is the element of $\mathbf\Lambda^\perp$ that corresponds to $h(t,A,Z)=-A$; that is:
\ba\label{scoreeff}
S_{eff}=\int_0^\tau\left\{-A-\frac{\partial_t h_0(t, Z) }{\exp(\beta A)\lambda(t | A,Z)}+\frac{h_0(t,Z)}{\exp(\beta A)}\right\}S(t | A,Z)dM(t)=0,
 \end{align}
 where $h_0(t,Z)$ is a solution to the differential equation \eqref{diffeqc} with $h(t,A,Z)=-A$.

\section{Final remarks}

Given $n$ independently and identically distributed (i.i.d.) observations, 
one can estimate the unknown nuisance parameters that are needed: $R(t, z)$, 
$S_c(t | a, z)= \exp\{ -\Lambda_c(t|a, z)\}$ 
and $\pi(z)= P(A =1  | z)$, either nonparametrically or using parametric or semiparametric working models.
A major challenge for 
 implementation of the efficient score  \eqref{scoreeff} 
is the solution of the complicated integro-differential equation \eqref{diffeqc} \citep{lakshmikantham1995theory}. 

Recall that $h_0(t,Z)= h_0(t,Z ; R, S_c, \pi)$ is a solution of the integro-differential equation 
\eqref{diffeqc}, with $h(t,A,Z)=-A$ for the efficient score  \eqref{scoreeff}. 
With the estimates $\hat R(t,Z),\hat S_c(t | A, Z)$ and $ \hat \pi(Z)$,   
denote $\hat h_0(t,Z) = h_0(t,Z_i; \hat R, \hat S_c, \hat \pi)$
 an estimator for $h_0(t,Z)$. 
The parameter of interest $\beta$ can then be estimated by solving
\ba
&\sum_{i=1}^n \delta_iV_i\left(X_i; \beta, \hat R, \hat S_c, \hat \pi\right)\frac{\exp(\beta A_i)}{\left\{\exp(\beta A_i)+ \hat R(X_i,Z_i)\right\}} \nonumber\\
&-\sum_{i=1}^n\int_0^{X_i} V_i\left(t; \beta, \hat R, \hat S_c, \hat \pi\right)\frac{\hat r(t,Z_i)\exp(\beta A_i)}{\left\{\exp(\beta A_i)+ \hat R(t,Z_i)\right\}^{2}}dt=0,
\end{align}
where
\ba
V_i\left(t; \beta, \hat R, \hat S_c, \hat \pi\right) 
&=-A_i-\frac{\partial_t \hat h_0(t,Z)\left\{\exp(\beta A_i)+ \hat R(t,Z_i)\right\}}{\exp(\beta A_i) d\hat R(t,Z_i)} 
\nonumber\\
&\:\;\;+\frac{\hat h_0(t,Z)}{\exp(\beta A_i)}. 
\end{align}

One may consider 
numerically approximating such solution for various values of $t$ on a grid of $[0,\max(X_1,\ldots,X_n)]$ and for every $Z_i$.
If a grid of $m$ elements is chosen this means numerically approximating the solution of $n*m$ integro-differential equations.
This can be therefore computationally intensive and choosing the appropriate $m$ and the appropriate grid may be non-trivial, and 
is  beyond the scope of this work.

For the asymptotic properties of this estimator we may first assume that the expression of $h_0(t,Z ; R, S_c, \pi)$ is known. Let $\hat \beta$ be  the solution of $\sum_{i=1}^n  U_i\left(X_i; \beta, \hat R, \hat S_c, \hat \pi, h_0\right)=0$.
The $\sqrt{n}$-consistency and asymptotic normality of $\hat \beta$ can be established using similar techniques as in \cite{hou2019estimating, luo2023doubly, rava2023doubly}.
Under certain regularity conditions, one can apply the mean value theorem to derive: 
\ba
\sqrt{n}(\hat{\beta}-\beta)=\left[\frac{1}{n}\sum_{i=1}^n  \frac{\partial}{\partial \beta}U_i\left(X_i; \tilde\beta,  \hat R,  \hat S_c, \hat \pi, h_0\right)\right]^{-1}\frac{1}{\sqrt{n}}\sum_{i=1}^n  U_i\left(X_i; \beta,  \hat R, \hat S_c, \hat \pi, h_0\right),\nonumber
\end{align}
where $\tilde \beta$ is  between $\hat \beta$ and the true value $\beta_0$.
Under certain convergence assumptions on the estimators of the nuisance parameters $\hat R, \hat S_c, \hat \pi$, one can use the orthogonality of the score to show that 
\ba
&\sqrt{n}(\hat{\beta}-\beta)\\\nonumber
&\;\;\;=\left[-\frac{1}{n}\sum_{i=1}^n  \frac{\partial}{\partial \beta}U_i\left(X_i; \tilde\beta,  R,  S_c, \pi, h_0\right)\right]^{-1}\frac{1}{\sqrt{n}}\sum_{i=1}^n  U_i\left(X_i; \beta,  R,  S_c, \pi, h_0\right) +o_p(1).\nonumber
\end{align}
The martingale central limit theorem can then be applied to $\sum_{i=1}^n  U_i\left(X_i; \beta,  R,  S_c, \pi, h_0\right)/ \sqrt{n}$ to conclude the proof of asymptotic normality.
If the estimator $\hat h_0$ is used instead of the true expression $h_0$ in the definition of $\hat \beta$, an extra layer of complications is added to the proof. Still we expect that under  assumptions on the rate of convergence of the estimator $\hat h_0$, the asymptotic normality of $\hat \beta$  holds.
The technical details of the proof are however beyond the scope of this work.

On the other hand, the efficient score might offer a starting point for the development 
of doubly robust estimators \citep{rava2023doubly}. 
This is would be of interest for future work.

\section*{Appendix}

\subsection*{Appendix A. Formulas}\label{appendixA}


We provide here the coefficients of the integro-differential equation introduced in Lemma \ref{lemma2}.

 \be
 m(t,A,Z)&=&\frac{r(t,Z)E\left[S_c(t \mid A,Z)\exp(\beta A)\left\{\exp(\beta A)+ R(t,Z)\right\}^{-3}\mid Z\right]}{E\left[S_c(t \mid A,Z)\exp(\beta A)\left\{\exp(\beta A)+ R(t,Z)\right\}^{-2} \mid  Z \right]},\nonumber\\
q\{t,A,Z,h(\cdot)\} 
&=&\frac{r(t,Z)E\left[h(t,A,Z)S_c(t \mid A,Z)\left\{\exp(\beta A)\right\}^2\left\{\exp(\beta A)+ R(t,Z)\right\}^{-3}   \mid  Z \right]}{E\left[S_c(t \mid A,Z)\exp(\beta A)\left\{\exp(\beta A)+ R(t,Z)\right\}^{-2}  \mid  Z \right]}\nonumber\\
&+&\frac{r(t,Z)\int_0^t r(u,Z)E\left[h(u,A,Z) S_c(u \mid A,Z) \left\{\exp(\beta A)\right\}^2\left\{\exp(\beta A)+ R(u,Z)\right\}^{-4}  \mid  Z \right]du}{E\left[S_c(t \mid A,Z)\exp(\beta A)\left\{\exp(\beta A)+ R(t,Z)\right\}^{-2}  \mid  Z \right]},\nonumber\\
v(t,A,Z)&=& \frac{r(t,Z)}{E\left[S_c(t \mid A,Z)\exp(\beta A)\left\{\exp(\beta A)+ R(t,Z)\right\}^{-2}  \mid  Z \right]},\nonumber\\
w(t,A,Z)&=&E\left[S_c(t \mid A,Z)\exp(\beta A)\left\{\exp(\beta A)+ R(t,Z)\right\}^{-3}\mid  Z \right],\nonumber\\
k(t,A,Z)&=& r(t,Z)E\left[S_c(t\mid A,Z)\exp(\beta A)\left\{\exp(\beta A)+ R(t,Z)\right\}^{-4} \mid  Z \right].\nonumber
 \ee

\subsection{Appendix B. Proofs of the main results}\label{mainproof}

We report here the proofs of the main results, Lemma \ref{lemma1} and Lemma \ref{lemma2}.

\begin{proof}[Proof of Lemma \ref{lemma1}]
The spaces $\Lambda_{1s}$,  $\Lambda_{2s}$ and $\Lambda_{3s}$ are orthogonal to each other because the corresponding nuisance parameters are variationally independent and the likelihood factors into the likelihoods for each of them \citep{tsiatis2007semiparametric}.
By this and by definition of $\mathbf\Lambda$, \eqref{ds} is directly proven.

Lemma 5.1 and pag.117 of \citet{tsiatis2007semiparametric} proves \eqref{lam2} and \eqref{lam3}, respectively.

We now prove \eqref{lam1}. When the nuisance parameter has finite dimension, the nuisance tangent space is simply the space spanned by the nuisance score. For semiparametric model, the nuisance tangent space is the mean-square closure of the nuisance tangent spaces of all parametric submodels. 
Let's consider the following generic parametric submodel: 
\be
S(t | A,Z ; \gamma)= \frac{\exp{\left(\beta A\right)}}{\exp{\left(\beta A\right)}+R(t,Z; \gamma)},
\ee
where $\gamma_0$ indicates the true value of the parameter.
If we focus on this parametric submodel, by Lemma 2 of \cite{rava2023doubly}, we have:
\be
S_{\gamma}&=&\left.\frac{\partial \log L}{\partial \gamma}\right|_{\gamma=\gamma_{0}}
\\
&=&\int_0^\tau \left\{\frac{ \partial_{\gamma}r(t,Z; \gamma_{0})}{ r(t,Z)}-\frac{ \partial_{\gamma} R(t,Z;\gamma_0)}{\exp (\beta A)+ R(t,Z)}\right\}dM(t).
\ee
We therefore conjecture the following:
\be\label{lam12}
\Lambda_{1s}&=&\left\{\int_0^\tau \left[\frac{ \partial_{t}h(t,Z)}{ r(t,Z)}-\frac{h(t,Z)}{\exp (\beta A)+ R(t,Z)}\right]dM(t)\right.
\\&&\left.\;\;:\;\;for\;\;all\;\;h(t,Z)\;\;s.t.\;\;\partial_t h(t, Z)\;\;exists\right\}.\nonumber
\ee
The above calculations prove that the nuisance tangent space of any parametric submodel belongs to $\Lambda_{1s}$ defined in \eqref{lam12}. 
To complete our proof we need to prove that, for any element of the conjectured $\Lambda_{1s}$, indexed by $h(t,Z)$, there exists a parametric submodel such that, such element is part of its nuisance tangent space. 
Given $h(t,Z)$, straightforward algebra proves that the score of the following parametric submodel
\be
S(t | A,Z)= \frac{\exp(\beta A)}{\exp(\beta A) + \gamma h(t,Z) + R(t,Z)}
\ee
corresponds to the element of $\Lambda_{1s}$ indexed by the chosen $h(t,Z)$.
Our conjecture is therefore proven right and so \eqref{lam12}.
Simplification of \eqref{lam12} leads to \eqref{lam1}.
\end{proof}

\begin{proof}[Proof of Lemma \ref{lemma2}]
If we don't put any restriction on the mechanism that generates the data, it follows from Theorem 4.4 of \citet{tsiatis2007semiparametric} that the tangent space is $\mathcal{H}=\{h(X,\delta,A,Z)\;:\;E(h)=0,\;E(h^\top h)<\infty\}$, i.e. the entire Hilbert space.
Since model \eqref{model} does not impose any restriction on the censoring distribution or the distribution of the treatment and the covariates, it is true that:
\be
\mathcal{H}=\Lambda^*_{1s} \oplus \Lambda_{2s} \oplus \Lambda_{3s},
\ee
where $\Lambda^*_{1s}$ is the tangent space associated with the distribution of $T$ given $A,Z$, now left arbitrary.
Therefore, by \eqref{ds}, $\mathbf\Lambda^\perp$ is the residual of the projection of an arbitrary element of $\Lambda_{1s}^*$ onto $\Lambda_{1s}$.
Similarly to the proof of \eqref{lam1}, it can be shown that:
\be
\Lambda_{1s}^*&=&\left\{\int_0^\tau h(t,A,Z)S(t | A,Z)dM(t)\;\;for\;\;all\;\;h(t,A,Z)\right\}.
\ee
We hence now focus on deriving $\prod\left[\int_0^\tau h(t,A,Z)S(t | A,Z)dM(t) | \Lambda_{1s}\right]$ for a generic $h(t,A,Z)$.

By definition of projection and by \eqref{lam1}, we need to find $h_0(t,Z)$ such that, for any $h(t,Z)$:
\be
0&=&E\left(\left[\int_0^\tau \left\{h(t,A,Z)-\frac{\partial_t h_0(t, Z) }{\exp(\beta A)\lambda(t | A,Z)}+\frac{h_0(t,Z)}{\exp(\beta A)}\right\}S(t | A,Z)dM(t)\right]\nonumber
\right.\\
&&\label{ortcond}\;\;\;\;
\left.\cdot\left[\int_0^\tau \left\{\frac{\partial_t h(t, Z) }{\exp(\beta A)\lambda(t | A,Z)}-\frac{h(t,Z)}{\exp(\beta A)}\right\}S(t | A,Z)dM(t)\right]\right).
\ee
Given two martingales $M,M'$, we use $<M,M'>$ to indicate the predictable covariation process.
By \eqref{ortcond} we have:
\begin{align*}
0&=E\left[\int_0^\tau \left\{h(t,A,Z)-\frac{\partial_t h_0(t, Z) }{\exp(\beta A)\lambda(t | A,Z)}+\frac{h_0(t,Z)}{\exp(\beta A)}\right\} \right.
\\& \left.\cdot \left\{\frac{\partial_t h(t, Z) }{\exp(\beta A)\lambda(t | A,Z)}-\frac{h(t,Z)}{\exp(\beta A)}\right\}S^2(t | A,Z)<dM(t)>\right].
\end{align*}
By \eqref{S}, \eqref{lambda} and some tedious algebra, we therefore have:
\begin{align*}
0&=E\left[\int_0^\tau\left\{h(t,A,Z)-\frac{\partial_t h_0(t, Z) }{\exp(\beta A)\lambda(t | A,Z)}+\frac{h_0(t,Z)}{\exp(\beta A)}\right\} \frac{\exp(\beta A)\partial_t h(t,Z)Y(t)}{\left\{\exp(\beta A) + R(t,Z)\right\}^2}dt  \right.
\\&\;\;\;\;\;\;\;\left.-\int_0^\tau \left\{h(t,A,Z)-\frac{\partial_t h_0(t, Z) }{\exp(\beta A)\lambda(t | A,Z)}+\frac{h_0(t,Z)}{\exp(\beta A)}\right\}\frac{\exp(\beta A)r(t,Z)h(t,Z)Y(t)}{\left\{\exp(\beta A) + R(t,Z)\right\}^{3}}dt\right].
\end{align*}
Applying integration by part on the second term gives us the following:
\ba\label{ortogon}
0&=E\left\{ \int_0^\tau H(t,A,Z)\partial_th(t,Z)dt\right\},
\end{align}
where
\ba
&H(t,A,Z) \nonumber
\\&=\left\{h(t,A,Z)-\frac{\partial_t h_0(t, Z) }{\exp(\beta A)\lambda(t | A,Z)}+\frac{h_0(t,Z)}{\exp(\beta A)}\right\} \frac{\exp(\beta A)Y(t)}{\left\{\exp(\beta A) + R(t,Z)\right\}^{2}}  \nonumber
\\&\;\;\;+\int_0^t\left\{h(u,A,Z)-\frac{\partial_u h_0(u, Z) }{\exp(\beta A)\lambda(u | A,Z)}+\frac{h_0(u,Z)}{\exp(\beta A)}\right\} \frac{\exp(\beta A)r(u,Z)Y(u)}{\left\{\exp(\beta A) + R(u,Z)\right\}^{3}}du. \nonumber
\end{align}

By \eqref{ortogon} we have that, for every $h(t,Z)$,
\ba
\int_0^\tau E\left[E\left\{H(t,A,Z) | Z\right\}\partial_th(t,Z)\right]dt=0.
\end{align}
This implies that, for almost every $t$,
\be\label{con}
E\left\{H(t,A,Z) | Z\right\}=0.
\ee 
The above can be easily proved by contraddiction setting $\partial_th(t,Z)=E\left\{H(t,A,Z) | Z\right\}$.
Condition \eqref{con} together with Lemma \eqref{alg} in \ref{appendixB} completes the proof.
\end{proof}

\subsection*{Appendix C. Additional lemmas}\label{appendixB}

We report here an additional lemma used in the proofs of the main results.

\begin{lemma}\label{alg}
For a generic functions $B(t,A,Z)$, we have:
\ba\label{alg1c}
E \left\{ B(t,A,Z)Y(t)\mid Z\right\}=E\left\{B(t,A,Z)S(t \mid A,Z) S_c(t \mid A,Z) \mid Z\right\}
\end{align}
\end{lemma}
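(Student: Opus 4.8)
The plan is to reduce the identity to the conditional independence assumption $C \perp T \mid A,Z$ stated in the model section, via the tower property of conditional expectation. First I would observe that for a fixed time $t$ the quantity $B(t,A,Z)$ is a function of $(A,Z)$ alone and is therefore measurable with respect to the $\sigma$-algebra generated by $(A,Z)$. Since the outer conditioning is on $Z$, which is coarser than $(A,Z)$, I can insert an inner conditional expectation given $(A,Z)$ and pull $B(t,A,Z)$ out of it:
\begin{align*}
E\{B(t,A,Z)Y(t)\mid Z\}=E\bigl[\,B(t,A,Z)\,E\{Y(t)\mid A,Z\}\,\big|\,Z\,\bigr].
\end{align*}

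The key step is then to evaluate the inner factor $E\{Y(t)\mid A,Z\}$. Using $Y(t)=\mathbbm{1}\{X\geq t\}$ with $X=\min(T,C)$, the at-risk event decomposes as $\{X\geq t\}=\{T\geq t\}\cap\{C\geq t\}$, so that
\begin{align*}
E\{Y(t)\mid A,Z\}=P(T\geq t,\ C\geq t\mid A,Z).
\end{align*}
Invoking the assumption $C\perp T\mid A,Z$, this conditional probability factorizes into $P(T\geq t\mid A,Z)\,P(C\geq t\mid A,Z)=S(t\mid A,Z)\,S_c(t\mid A,Z)$. Substituting this back into the displayed identity above recovers exactly the right-hand side of \eqref{alg1c}, completing the argument.

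There is no deep obstacle here: the result is a direct consequence of conditional independence of failure and censoring given the covariates and treatment. The only points deserving minor care are (i) confirming that, for fixed $t$, $B(t,A,Z)$ depends only on $(A,Z)$ so that it may legitimately be factored through the inner expectation, and (ii) the identification $P(T\geq t\mid A,Z)=S(t\mid A,Z)$, and likewise for the censoring survival function, which holds at continuity points of these functions and thus for almost every $t$ under the absolute continuity implicit in the existence of the hazards $\lambda(t\mid A,Z)$ and $\lambda_c(t\mid A,Z)$.
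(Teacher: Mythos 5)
Your proof is correct and follows exactly the same route as the paper's: the tower property to condition on $(A,Z)$, followed by factorization of $E\{Y(t)\mid A,Z\}=P(T\geq t,\,C\geq t\mid A,Z)$ via the assumption $C\perp T\mid A,Z$. You simply spell out the intermediate step $\{X\geq t\}=\{T\geq t\}\cap\{C\geq t\}$ that the paper leaves implicit, which is a fine (slightly more explicit) rendering of the same argument.
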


\begin{proof}
By the tower law of conditional expectation and by the fact that $T \perp C \mid A,Z$, we have:
\ba
E \left\{ B(t,A,Z)Y(t)\mid Z\right\}&=E \left[ B(t,A,Z) E \left\{Y(t) \mid A,Z\right\}\mid Z\right]\\
&=E \left[ B(t,A,Z) S(t \mid A,Z) S_c(t \mid A,Z)\mid Z\right]
\end{align}
\end{proof}

\bibliographystyle{natbib}
\bibliography{causalpropoddsmodel}


\end{document}